\newtheorem{theorem}{Theorem}[section]
\newtheorem{corollary}{Corollary}[section]
\newtheorem{definition}{Definition}[section]
\newcommand{\qed}{\hfill\hbox{\rlap{$\sqcap$}$\sqcup$}}
\newenvironment{proof}{\noindent \emph{Proof.\,}}{\qed}
\def\etal{{\it et~al.}\,}
\newcommand{\frechet}{Fr\'echet}
\newcommand{\dfre}{d_{F}}
\newcommand{\myfootnote}[2]{%
	\thispagestyle{fancy}
	\fancyhf{}
	\renewcommand{\headrulewidth}{0pt}
	\lfoot{\footnotesize\emph{#1}}
	\rfoot{\footnotesize\emph{#2}}
}
\begin{document}

\begin{titlepage}  

\title{On the Complexity of Protein Local Structure Alignment Under the Discrete Fr\'{e}chet Distance}
\author{
Binhai Zhu
\thanks{Department of Computer Science, Montana State University, Bozeman, MT 59717-3880, USA. Email: {\tt bhz@cs.montana.edu}.}
}

\date{}
\maketitle
\myfootnote{Manuscript}{September 5, 2007}

\begin{abstract}

Protein structure alignment is a fundamental problem in computational
and structural biology. While there has been lots of experimental/heuristic
methods and empirical results, very little is known regarding the
algorithmic/complexity aspects of the problem, especially on protein local
structure alignment. A well-known measure to characterize the similarity of
two polygonal chains is the famous Fr\'{e}chet distance and with the
application of protein-related research, a related discrete Fr\'{e}chet
distance has been used recently. In this paper,
following the recent work of Jiang, {\em et al.} we investigate the protein
local structural alignment problem using bounded discrete Fr\'{e}chet distance.
Given $m$ proteins (or protein backbones, which are 3D polygonal chains),
each of length $O(n)$, our main results are summarized as follows.
\begin{itemize}
\item If the number of proteins, $m$, is not part of the input, then the
problem is NP-complete; moreover, under bounded discrete Fr\'{e}chet distance
it is NP-hard to approximate the maximum size common local structure
within a factor of $n^{1-\epsilon}$. These results hold both when all
the proteins are static or when translation/rotation are allowed.
\item If the number of proteins, $m$, is a constant, then there is a polynomial
time solution for the problem.
\end{itemize}

\end{abstract}

\noindent
{\bf Keywords}: Protein structure alignment, Fr\'{e}chet distance, Discrete Fr\'{e}chet
distance, Approximation, NP-hardness
\end{titlepage}
\newpage

\section{Introduction}

As a famous distance measure in the field of abstract spaces, Fr\'{e}chet
distance was first defined by Maurice Fr\'{e}chet a century ago \cite{Fr06}.
Alt and Godau first used it in measuring the similarity of polygonal chains
in 1992 \cite{AG92}. It is well known that the Fr\'{e}chet distance between
two two-dimensional (2D) polygonal chains (polylines) can be computed in
polynomial time \cite{AG92,AG95}, and even under translation or rotation
(though the running time is much higher) \cite{AKW01}.
In three-dimensional space (3D), Wenk should that given two chains with
sum of length $N$, the minimum Fr\'{e}chet distance between them can be computed
in $O(N^{3f+2}\log N)$ time, where $f$ is the degree of freedom for moving
the chains \cite{We02}.
So with translation alone this minimum Fr\'{e}chet distance can be computed in
in $O(N^{11}\log N)$ time, and when both translation and rotation are
allowed the corresponding minimum Fr\'{e}chet distance can be computed in
$O(N^{20}\log N)$ time. These results can be generalized to any fixed
dimensions \cite{We02}. While computing (approximating) Fr\'{e}chet distance
for surfaces is in general NP-hard \cite{Go98,HS04}, it is polynomially solvable
for restricted surfaces \cite{BBW06}.

In 1994, Eiter and Mannila defined the {\em discrete Fr\'{e}chet distance}
between two polygonal chains $A$ and $B$ (in any fixed dimensions) and it
turns out that this simplified distance is always realized by two vertices
in $A$ and $B$ \cite{EM94}. They also showed that with dynamic programming
the discrete Fr\'{e}chet distance between them can be computed in $O(|A||B|)$
time.

Recently, Jiang, Xu and Zhu applied the discrete Fr\'{e}chet distance in
(globally) aligning the backbones of proteins (which is called the {\em protein
structure-structure alignment} or more generally, the {\em protein global
alignment} problem) \cite{JXZ07}. In fact, in this
application the discrete Fr\'{e}chet distance makes more sense as the backbone
of a protein is simply a polygonal chain in 3D, with each vertex being the
alpha-carbon atom of a residue. So if the (continuous) Fr\'{e}chet distance
is realized by an alpha-carbon atom and some other point which does not
represent an atom, it is not meaningful biologically. Jiang, {\em et al.}
showed that given two 2D (or 3D) polygonal chains the minimum discrete
Fr\'{e}chet distance between them, under both translation and rotation, can be
computed in polynomial time. They also applied some ideas therein to design an
efficient heuristic for the original protein structure-structure alignment
problem in 3D and the empirical results showed that their alignment is more
accurate compared with previously known solutions.

In essence, the result of Jiang, Xu and Zhu \cite{JXZ07} implies that the
protein global alignment problem, which is to find all proteins in a given
set ${\cal P}$ similar to a query protein or some protein in ${\cal P}$
(under translation and rotation), is
polynomially solvable. However, very little algorithmic/complexity results
is known regarding the protein local structure alignment problem. The only 
such recent result was due to Qian, {\em et al.} who showed that under
the RMSD distance the problem is NP-complete but admits a PTAS \cite{QL+07}.
On the other hand, there have been lots of experimental/heuristic methods with
practical systems since 1989, e.g., SSAP \cite{TO89}, DALI \cite{HS93,HP00}, CATH \cite{OM+97},
CE \cite{SB98}, SCOP \cite{CA+00}, MAMMOTH \cite{OS+02} and TALI \cite{MWV08}.
In this paper, we show that if many proteins are given then the local structure
alignment problem, under the discrete Fr\'{e}chet distance, is very hard; on
the other hand, if only a small number of proteins are given then there is a
polynomial time solution for the problem.

The paper is organized as follows. In Section 2, we introduce some basic
definitions regarding \frechet\ distance and review some known results.
In Section 3, we show the hardness result for the protein local structure
alignment problem.
In Section 4, we show how to solve the problem when $m$
is a constant. In Section 5, we conclude the paper with
several open problems.

\section{Preliminaries}

Given two 3D polygonal chains $A,B$ with $|A|=k$ and $|B|=l$ vertices respectively,
we aim at measuring the similarity of $A$ and $B$ (possibly under translation
and rotation) such that their distance is minimized under certain measure.
Among the various distance measures, the Hausdorff distance is known to be
better suited for matching two point sets than for matching two polygonal
chains; the (continuous) \frechet\ distance is a superior measure for matching
two polygonal chains, but it is not quite easy to compute \cite{AG92}.

Let $X$ be the Euclidean space $\mathbb{R}^3$; let $d(a,b)$ denote the
Euclidean distance between two points $a,b\in X$.
The (continuous) \frechet\ distance between two parametric curves $f:[0,1]\to X$
and $g:[0,1]\to X$ is 
$$
\delta_\mathcal{F}(f,g) = \inf_{\alpha,\beta} \max_{s\in[0,1]}
d(f(\alpha(s)),g(\beta(s))),
$$
where $\alpha$ and $\beta$ range over all continuous non-decreasing real
functions with $\alpha(0) = \beta(0) = 0$ and $\alpha(1) = \beta(1) = 1$
\footnote{This definition holds in any fixed dimensions.}.

Imagine that a person and a dog walk along two different paths while connected
by a leash; moreover, they always move forward, possibly at different paces.
Intuitively, the minimum possible length of the leash is the \frechet\ distance between the two paths.
To compute the \frechet\ distance between two polygonal curves $A$ and $B$
(in the Euclidean plane) of $|A|$ and $|B|$ vertices, respectively,
Alt and Godau \cite{AG92} presented an $O(|A||B|\log^2(|A||B|))$ time algorithm.
Later this bound was reduced to $O(|A||B|\log(|A||B|))$ time \cite{AG95}.

We now define the discrete \frechet\ distance following \cite{EM94}.

\begin{definition}
Given a polygonal chain (polyline) in 3D, $P=$ $\langle p_1,\dots,p_k\rangle$ of $k$
vertices, a \textbf{$m$-walk} along
$P$ partitions the path into $m$ disjoint non-empty subchains
$\{{\cal P}_i\}_{i=1..m}$ such that ${\cal P}_i=$ $\langle p_{k_{i-1}+1},\dots,p_{k_i}\rangle$
and $0 = k_0 < k_1 < \dots < k_{m} = k$.

Given two 3D polylines $A=$ $\langle a_1,\dots,a_k\rangle$ and $B=$ $\langle b_1,\dots,b_l\rangle$,
a \textbf{paired walk} along $A$ and $B$ is
a $m$-walk $\{{\cal A}_i\}_{i=1..m}$ along $A$ and
a $m$-walk $\{{\cal B}_i\}_{i=1..m}$ along $B$ for some $m$, such that,
for $1 \le i \le m$, either $|{\cal A}_i| = 1$ or $|{\cal B}_i| = 1$
(that is, ${\cal A}_i$ or ${\cal B}_i$ contains exactly one vertex).
The \textbf{cost} of a paired walk
$W = \{({\cal A}_i,{\cal B}_i)\}$ along two paths $A$ and $B$ is
$$
\dfre^W(A,B) = \max_i \max_{(a,b) \in {\cal A}_i \times {\cal B}_i} d(a,b).
$$

The \textbf{discrete \frechet\ distance} between two polylines $A$ and $B$ is
$$
\dfre(A,B) = \min_W \dfre^W(A,B).
$$
The paired walk that achieves the discrete \frechet\ distance between two paths
$A$ and $B$ is also called the \textbf{\frechet\ alignment} of $A$ and $B$.
\end{definition}

Consider the scenario in which the person walks (jumps) along $A$ and the dog along $B$.
Intuitively, the definition of the paired walk is based on three cases:
\begin{enumerate}
\item $|{\cal B}_i| > |{\cal A}_i| = 1$:
the person stays and the dog moves (jumps) forward;
\item $|{\cal A}_i| > |{\cal B}_i| = 1$:
the person moves (jumps) forward and the dog stays;
\item $|{\cal A}_i| = |{\cal B}_i| = 1$:
both the person and the dog move (jump) forward.
\end{enumerate}

\begin{figure}[hbt]
\centerline{\epsffile{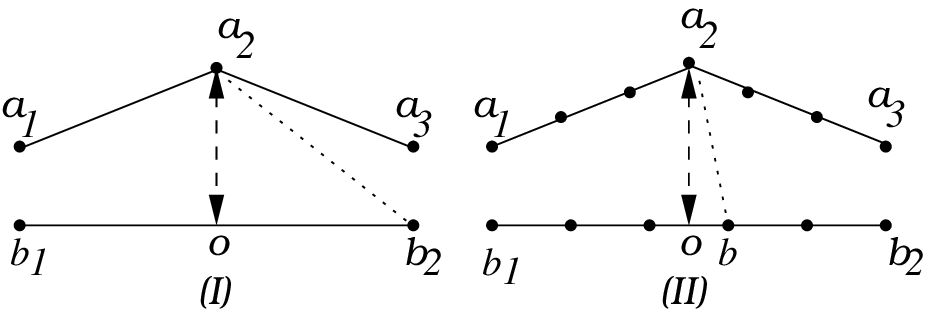}}
\begin{center}{\small {\bf Fig. 1}. The relationship between
discrete and continuous \frechet\ distances.}
\end{center}
\end{figure}

Eiter and Mannila presented a simple dynamic programming algorithm
to compute $\dfre(A,B)$ in $O(|A||B|)=O(kl)$ time \cite{EM94}. Recently,
Jiang, \etal\ showed that the minimum discrete \frechet\ distance between
two chains in 2D, $A$ and $B$, under translation can be computed in
$O(k^3l^3\log(k+l))$ time, and under both translation and rotation it can be
computed in $O(k^4l^4\log(k+l))$ time \cite{JXZ07}. For 3D chains
these bounds are $O(k^4l^4\log(k+l))$ and $O(k^7l^7\log(k+l))$ respectively
\cite{JXZ07}. They are significantly faster than the corresponding bounds
for the continuous \frechet\ distance (certainly due to a simpler
distance structure), which are $O((k+l)^{11}\log(k+l))$ and 
$O((k+l)^{20}\log(k+l))$ respectively for 3D chains \cite{We02}.

We comment that while the discrete \frechet\ distance could be arbitrarily
larger than the corresponding continuous \frechet\ distance (e.g., in Fig.~1 (I),
they are $d(a_2,b_2)$ and $d(a_2,o)$ respectively), by adding sample points on the
polylines, one can easily obtain a close approximation of the continuous
\frechet\ distance using the discrete \frechet\ distance (e.g., one can
use $d(a_2,b)$ in Fig.~1 (II) to approximate $d(a_2,o)$). This fact was
pointed before in \cite{EM94,In02} and is supported by the fact
that the segments in protein backbones are mostly of similar lengths.
Moreover, the discrete \frechet\ distance
is a more natural measure for matching the geometric shapes of biological
sequences such as proteins. As we mentioned in the introduction, in such an
application, continuous \frechet\ does not make much sense to biologists.

In the remaining part of this paper, for the first time, we investigate the
locally aligning a set of polygonal chains (proteins or protein backbones)
in 3D, under the discrete Fr\'{e}chet distance. 

\section{Protein Local Structure Alignment is Hard}

Given a set of proteins modeled as simple 3D polygonal chains,
the Protein Local Structure Alignment (PLSA) problem is defined as follows.

\noindent
{\bf Instance:} Given a set $m$ of proteins $P_1$, $P_2$, ..., $P_m$
in 3D, each with length $O(n)$, and a real number $D$.\\
{\bf Problem:} Does there exist a chain $C$ of $k$ vertices such that
the vertices of $C$ are from $P_i$'s, and $C$ and a subsequence of $P_i$
($1\leq i\leq m$) has discrete Fr\'{e}chet distance at most $D$
(under translation and rotation)? \\

If no translation and rotation is allowed, we call the corresponding
problem {\em static} PLSA. For the optimal version of the problem, we
wish to maximize $k$ when $D$ is given. The (polynomial-time) approximation
solution will also be referred to as approximating the optimal solution
value $k^*$ when it is hard to compute exactly. We will see that it is also
hard to approximate $k^*$ even for static PLSA. We first prove the following
theorem.

\begin{theorem}
Given $D=\delta$, the static PLSA problem does not admit any approximation
of factor $n^{1-\epsilon}$ unless P=NP.
\end{theorem}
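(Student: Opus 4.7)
The plan is to reduce from the Maximum Clique problem, whose $n^{1-\epsilon}$-inapproximability on $n$-vertex graphs is known unless P$=$NP. Given a graph $G=(V,E)$ with $|V|=n$, I choose $n$ distinct points $p_1,\dots,p_n$ in $\mathbb{R}^3$ (say, along a line), spaced so that every pairwise distance exceeds $2\delta$. I then construct $1+\binom{n}{2}-|E|$ polygonal chains, each of length $n$: a reference chain $P_0=\langle p_1,p_2,\dots,p_n\rangle$, and, for every non-edge $\{v_i,v_j\}$ of $G$ (with $i<j$), a chain $P_{ij}$ obtained from $P_0$ by swapping the positions of $p_i$ and $p_j$. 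The resulting instance has $m=1+\binom{n}{2}-|E|$ proteins, each of length $n$, so it is of polynomial size and satisfies the length-$O(n)$ hypothesis.

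The crux of the argument is a geometric-to-combinatorial reduction: because the $p_i$ are pairwise farther than $2\delta$ apart, any chain $C$ whose discrete \frechet\ distance to a subsequence of one of the constructed proteins is at most $\delta$ must, after collapsing consecutive vertex repetitions, be obtained by picking an index-respecting subsequence of that protein's vertex list. Indeed, the definition of the paired walk forces every paired $(a,b)\in C\times\text{subchain}$ to satisfy $d(a,b)\le\delta$, and the spacing forces $a=b$; the monotonicity of the paired walk then forces the common ordering.

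Given this lemma, a chain $C$ of $k$ distinct vertices that is simultaneously within discrete \frechet\ distance $\delta$ of a subsequence of $P_0$ and of every $P_{ij}$ corresponds exactly to a subset $S\subseteq\{v_1,\dots,v_n\}$ of size $k$ such that, for every non-edge $\{v_i,v_j\}$, $S$ does not contain both $v_i$ and $v_j$ (since $P_0$ orders $p_i$ before $p_j$ while $P_{ij}$ reverses them, so no common subsequence can include both). Equivalently, $S$ is a clique of $G$; and conversely every clique of $G$ yields such a common chain. Hence the optimum $k^*$ of the constructed static PLSA instance equals the clique number $\omega(G)$, and an $n^{1-\epsilon}$ approximation for static PLSA would yield the same for Max Clique on $n$-vertex graphs, contradicting P$\ne$NP.

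The main obstacle is proving the geometric-to-combinatorial lemma carefully: one must rule out any ``cheating'' paired walk that pairs distinct $p_s$'s or injects auxiliary vertices, which is precisely what the $2\delta$ spacing controls. Note that because the vertices of $C$ may only be drawn from $\bigcup_i P_i\subseteq\{p_1,\dots,p_n\}$, no extra matches are possible. Everything else—polynomial-size construction, protein length $n$, and transfer of the approximation factor—is routine bookkeeping.
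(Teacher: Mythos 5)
Your geometric lemma (spacing the points more than $2\delta$ apart so that any chain within discrete \frechet\ distance $\delta$ of a subsequence is forced to be, combinatorially, a common subsequence) is sound and plays the same role as Properties (a)--(c) in the paper. The gap is in the combinatorial gadget. Swapping the positions of $p_i$ and $p_j$ in $P_0$ to form $P_{ij}$ does not only forbid the pair $\{p_i,p_j\}$ from appearing together in a common subsequence: it also reverses the relative order of $p_i$ with every $p_k$ for $i<k<j$ (and of $p_j$ with every such $p_k$). Concretely, take $n=3$ with edges $\{v_1,v_2\}$ and $\{v_2,v_3\}$ and the single non-edge $\{v_1,v_3\}$. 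Then $P_0=\langle p_1,p_2,p_3\rangle$ and $P_{13}=\langle p_3,p_2,p_1\rangle$, whose longest common subsequence has length $1$, yet $\omega(G)=2$ via the clique $\{v_1,v_2\}$: in $P_{13}$ the vertex $p_1$ now sits \emph{after} $p_2$, so $\langle p_1,p_2\rangle$ is not a common subsequence. Hence your claimed identity $k^*=\omega(G)$ fails in the forward direction, and the reduction does not preserve the optimum.

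The repair is the standard two-block trick (used in the paper, following the Jiang--Li hardness proof for LCS of many sequences): for each constraint pair $\{v_i,v_j\}$ build a chain consisting of a first layer containing a copy of every point \emph{except} the $i$-th, followed by a second layer (offset by $\delta$ in the $z$-direction, so same-index copies still match within $\delta$ while different indices stay far apart) containing a copy of every point \emph{except} the $j$-th. Any set avoiding $\{v_i,v_j\}$ as a pair can be greedily threaded through the two layers in increasing index order, so only the intended pair is forbidden. The paper phrases this as a reduction from Independent Set with one chain per edge; your choice of Clique with one chain per non-edge is equivalent (pass to the complement graph) and keeps the instance polynomial, so that part of your write-up is fine once the gadget is replaced.
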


\begin{proof}
It is easy to see that PLSA belongs to NP.
We use a reduction from Independent Set to the Protein Local Structure
Alignment Problem. Independent Set is a well known NP-complete
problem which cannot be approximated within a factor of $n^{1-\epsilon}$
\cite{Ha99}. The general idea is similar to that of the longest common
subsequence problem for multiple sequences \cite{JL95}, but our details are
much more involved due to the geometric properties of the problem.

Given a graph $G=(V,E), V=\{v_1,v_2,\cdots,v_N\}, E=\{e_1,e_2,\cdots,e_M\}$,
we construct $M+1$ 3D chains $P_0,P_1,P_2,...,P_M$ as follows. (We assume
that the vertices and edges in $G$ are sorted by their corresponding indices.)

The overall reduction is as follows: ${\cal P}=\{P_0,P_1,P_2,...,P_M\}$, and
$$P_0=\langle v'_1,v'_2,\cdots, v'_n\rangle,$$ where
$v'_i=(i,i^2,0)$ is a 3D point for $i=1,...,n$.

\noindent
For each $e_r=(v_i,v_j)$ in $G$, we have a corresponding sequence (3D chain)
$$P_r=\langle v'_1,v'_2,\cdots, v'_{i-1},v'_{i+1},\cdots, v'_n,v"_1,v"_2,\cdots, v"_{j-1},v"_{j+1},\cdots, v"_n\rangle,$$ where
$v'_i=(i,i^2,0)$ and $v"_i=(i,i^2,\delta)$ are 3D points for $i=1,...,n$
and $\delta$ is an arbitrarily small positive real number less than 0.1.
 
\noindent
We claim that $G$ has an independent set of size $k$ if and only if there is
a chain $C$ of $k$ vertices such that the discrete Fr\'{e}chet distance
between $C$ and a subsequence of $P_r$, $S_r$, is at most $\delta$ (i.e.,
$d_F(C,S_r)\leq \delta$).
The following claims are made with the detailed proofs left out. 
 
{\bf Claim A.} $P_r$ is a simple polygonal chain in 3D.

{\bf Claim B.} $S_r$ is a simple polygonal chain in 3D with $|S_r|=k$.

If $G$ has an independent set of size $k$, then the chain $C$ can be
constructed as follows. Let the independent set of $G$ be
ordered as $I=\langle v_{i_1},v_{i_2},...,v_{i_k}\rangle$ with $i_1<i_2<...<i_k$.
For $r=0,1,...,M$, we scan $P_r$ in a greedy fashion to obtain the first
$v'_j$ or $v"_j$ such that the first component of its coordinate is $i_1$.
Repeat this process to obtain $S_r$. Then let any $S_r$ be $C$. Obviously,
$C$ has $k$ vertices and $|S_r|=k$ for $r=0,1...,M$.

If there is a chain $C$ of $k$ vertices such that the discrete Fr\'{e}chet
distance between $C$ and a subsequence of $P_r$, $S_r$, is at most $\delta$
(i.e., $d_F(C,S_r)\leq \delta)$, then we can see the following.

\noindent
{\bf Property (a)} Let $P_r=\langle v'_1,v'_2,\cdots, v'_{i-1},v'_{i+1},\cdots, v'_n,v"_1,v"_2,\cdots, v"_{j-1},v"_{j+1},\cdots, v"_n\rangle$, then $d(v'_{p},v"_q)>3$ for all $p\neq q$.

\noindent
{\bf Property (b)} Let $P_r=\langle v'_1,v'_2,\cdots, v'_{i-1},v'_{i+1},\cdots, v'_n,v"_1,v"_2,\cdots, v"_{j-1},v"_{j+1},\cdots, v"_n\rangle$, then $d(v'_{p},v"_p)\leq \delta$ for all $p\neq i,p\neq j$.

\noindent
{\bf Property (c)} Let $P_r=\langle u_1,u_2,\cdots, u_{O(n)}\rangle$, then
$|d(u_{p},u_{q})-d(u_{p'},u_{q'})|>>\delta$ as long as the first
components of the 4 coordinates of $u_{p},u_{q},u_{p'},u_{q'}$ are
all different.

As $\delta$ is very small, when $d_F(C,S_r)\leq \delta$, the vertices
of $C$ and $S_r$ must be matched orderly in a one-to-one fashion.
(In other words, the man walking on $C$ and the dog walking on $S_r$ must
move/jump together at each vertex. Otherwise, $d_F(C,S_r)>3>>\delta$.) We
now claim that the (ordered) vertices of $C$ correspond to an independent
set $I$ of $G$; moreover, if $C=\langle C_1,C_2,\cdots,C_k\rangle$ and $C_p=(x_p,y_p,z_p)$,
then $v_{x_p}\in I$. Suppose that $C_p=(x_p,y_p,z_p)$, $C_q=(x_q,y_q,z_q)$
and $v_{x_p},v_{x_q}\in I$ but there is an edge $e_t=(v_{x_p},v_{x_q})\in E$.
By our construction of $P_t$ (from $e_t$), $v'_{x_p}$ and $v"_{x_q}$ are not
included in $P_t$ and $v'_{x_q}$ precedes $v"_{x_p}$ in $P_t$. This is a
contradiction.

To conclude the proof of this theorem, notice that the reduction
take $O(MN)$ time.
\end{proof}

\begin{figure}[hbt]
\centerline{\epsffile{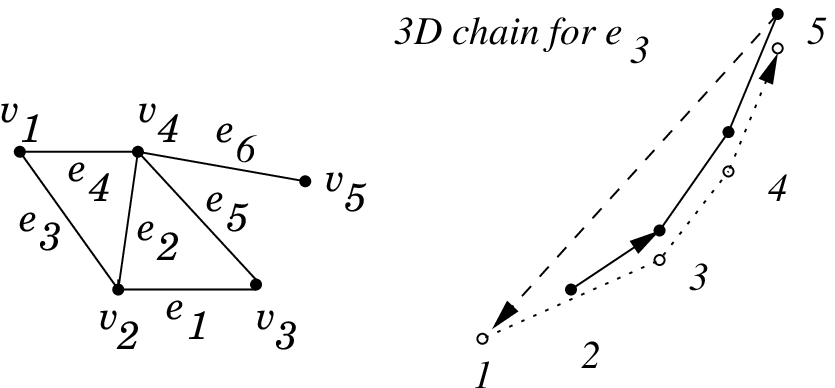}}
\label{}
\begin{center}{\bf Figure 1. Illustration of a simple graph for the reduction.}
\end{center}
\end{figure}

In the example shown in Figure 1, we have

$P_1=\langle v'_1,v'_3,v'_4,v'_5,v"_1,v"_2,v"_4,v"_5\rangle$,

$P_2=\langle v'_1,v'_3,v'_4,v'_5,v"_1,v"_2,v"_3,v"_5\rangle$,

$P_3=\langle v'_2,v'_3,v'_4,v'_5,v"_1,v"_3,v"_4,v"_5\rangle$,

$P_4=\langle v'_2,v'_3,v'_4,v'_5,v"_1,v"_2,v"_3,v"_5\rangle$,

$P_5=\langle v'_1,v'_2,v'_4,v'_5,v"_1,v"_2,v"_3,v"_5\rangle$, and

$P_6=\langle v'_1,v'_2,v'_3,v'_5,v"_1,v"_2,v"_3,v"_4\rangle$.

\noindent
An example of $P_3$ is shown in Figure 1 as well, in which case
black nodes are on the $Z=0$ plane and white nodes are on the
$Z=\delta$ plane (apparently for the visualization reason, the XY-plane
is slanted). The solid segments are on the
$Z=0$ plane, the dotted segments are on the $Z=\delta$ plane and
the only dashed segment connects two points on different planes.
Corresponding to the optimal independent set $\{v_1,v_3,v_5\}$ in $G$,
the optimal local alignment $C=\langle v'_1,v'_3,v'_5\rangle$ matches $P_3$ at its
subsequence $S_3=\langle v"_1,v"_3,v"_5\rangle$. 

\begin{corollary}
Given $D=\delta$ and when both translation and rotation are allowed, the
(maximization version of) PLSA problem does not admit any approximation
of factor $n^{1-\epsilon}$ unless P=NP.
\end{corollary}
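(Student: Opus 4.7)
The plan is to reuse the reduction from the proof of Theorem 3.1 verbatim and argue that translation and rotation do not give the adversary any additional power, because the underlying point configuration is extremely rigid. Every vertex of every chain $P_r$ lies within $\delta$ of the parabola $\gamma=\{(t,t^2,0):t\in\R\}$: the points $v'_i$ lie on $\gamma$, while $v"_i$ sits at height $\delta<0.1$ above $v'_i$. Consequently every feasible common chain $C$, whose vertices are drawn from the $P_i$'s, also consists of points within $\delta$ of $\gamma$. If $T_r$ is a rigid motion realizing $\dfre(T_r(C),S_r)\le\delta$, then $T_r(C)$ lies within $2\delta$ of $\gamma$.

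The heart of the argument is to show that each $T_r$ must be $O(\delta)$-close to the identity (modulo at most a reflection through the plane $Z=0$). First I would invoke the rigidity of $\gamma$: the parabola distances $d(v'_p,v'_q)=|p-q|\sqrt{1+(p+q)^2}$ are strictly distinct over unordered index pairs and consecutive values differ by $\Omega(1)\gg\delta$. Because $T_r$ preserves distances, the multiset of inter-vertex distances of $C$ equals that of $T_r(C)$, and each such distance is within $2\delta$ of a parabola distance of the matched vertices of $S_r$; the separation of parabola distances then forces the ordered index list of $C$'s vertices to coincide with that of $S_r$'s vertices. Second, once three or more non-collinear points near $\gamma$ are pinned to within $O(\delta)$ of their images, the rigid motion is determined up to reflection through the plane containing $\gamma$; any other candidate isometry of $\R^3$ would displace some $v'_i$ out of the $O(\delta)$-neighborhood of $\{v'_j,v"_j\}_{j=1}^n$, because no nontrivial translation or rotation even approximately preserves this specific finite configuration (e.g., a horizontal translation by $\Delta x\neq 0$ destroys the relation $y=x^2$). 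The only surviving ambiguity, the $Z\mapsto -Z$ reflection, merely swaps each $v'_i$ with $v"_i$ up to $\delta$ and has no effect on the index-based analysis of Theorem 3.1.

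With $T_r$ pinned to (essentially) the identity for every $r$, the remainder of the proof of Theorem 3.1 applies unchanged: the indices of $C$'s vertices form an independent set of $G$ of size $k$, and any independent set of size $k$ yields a feasible $C$. Hence the $n^{1-\epsilon}$ inapproximability of Independent Set transfers intact to PLSA under translation and rotation. The main obstacle is the rigidity step: one must carefully account for the cumulative $O(\delta)$ slack so that no continuous family of rigid motions can shift $C$ to a different index pattern without violating the Fr\'echet bound. This reduces to the elementary but delicate observation that the parabola's discrete distance spectrum on $\{1,\dots,n\}$ is separated by $\Omega(1)$, while the Fr\'echet threshold $\delta<0.1$ is vanishingly small in comparison.
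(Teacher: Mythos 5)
Your proposal follows essentially the same route as the paper: the paper's own proof of this corollary is a single sentence appealing to Properties (a)--(c) (the $\Omega(1)$-separated distance spectrum of the construction versus the threshold $\delta<0.1$) to conclude that translation/rotation cannot produce a topologically different common chain $C'$, which is exactly the rigidity argument you spell out in detail. Your write-up is in fact more explicit than the paper's; the only point worth double-checking is your stronger claim that \emph{all} unordered-pair distances on the integer parabola are distinct, whereas the paper's Property (c) only asserts separation when all four first coordinates differ --- either version suffices for the index-forcing step.
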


\begin{proof}
Due to Property (a), (b) and (c), translation/rotation will not be able to
generate another $C'$ which is topologically different from $C$.
\end{proof}

Notice that in our proof all the adjacent vertices in $C$ could be non-adjacent
in $P_i$, for $i=0,1,...,m$. Biologically, this might be a problem as one
residue alone sometimes cannot carry out any biological function.
Define a $c$-{\em substring} or a $c$-{\em subchain} of $P_i$ as a
continuous subchain of $P_i$ with at least $c$ vertices. Unfortunately, even
if we introduce this condition by forcing that $C$ is composed of $k$ ordered
$c$-substrings of each $P_i$, for some constant $c$, the above proof can be
modified to maintain a valid reduction from Independent Set. Call this
corresponding problem Protein $c$-Local Structure Alignment (PcLSA), in
which $C$ must be composed of $k$ ordered $c$-subchains of each $P_i$.
We have the following corollary.

\begin{corollary}
The maximization version of PcLSA does not admit any approximation
of factor $n^{1-\epsilon}$ unless P=NP.
\end{corollary}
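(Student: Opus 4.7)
The plan is to adapt the reduction in the proof of Theorem~3.1 by replacing each single ``anchor vertex'' $v'_i$ (resp.\ $v''_i$) with a small contiguous gadget $V'_i$ (resp.\ $V''_i$) consisting of $c$ points clustered in a tiny neighborhood of $(i,i^2,0)$ (resp.\ $(i,i^2,\delta)$). Concretely, I would place the $c$ points of $V'_i$ on a short polyline of diameter $\delta/10$ centered at $(i,i^2,0)$, and do the same for $V''_i$ at height $z=\delta$. The chains $P_0$ and $P_r$ are then defined exactly as before but with each single vertex replaced by its $c$-vertex gadget: $P_0=\langle V'_1,V'_2,\dots,V'_n\rangle$, and for each edge $e_r=(v_i,v_j)$, $P_r$ is obtained from the concatenation $\langle V'_1,\dots,V'_n,V''_1,\dots,V''_n\rangle$ by deleting the blocks $V'_i$ and $V''_j$. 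In particular, every gadget appears as a contiguous $c$-subchain of every $P_r$ in which it survives.

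The first step is to reprove the geometric analogues of Properties (a)--(c). Inside any single gadget $V'_i$ or $V''_i$, all pairwise distances are $O(\delta)$; between any two points lying in gadgets anchored at \emph{different} first coordinates $i\neq j$, the quadratic gap $|i^2-j^2|$ dominates the $O(\delta)$ cluster spread, so the distance is still bounded below by a constant much larger than $\delta$. Consequently, whenever $\dfre(C,S_r)\le \delta$, the man/dog schedule must match points of $C$ only to points whose first coordinate agrees exactly, forcing any $c$-subchain of $C$ to be Fr\'echet-aligned with a complete surviving gadget of $P_r$ (either a $V'$-copy or a $V''$-copy).

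For the equivalence, suppose $G$ has an independent set $I=\langle v_{i_1},\dots,v_{i_k}\rangle$ with $i_1<\dots<i_k$; then $C:=V'_{i_1}\,V'_{i_2}\cdots V'_{i_k}$ (concatenation) consists of exactly $k$ ordered $c$-subchains, and for each $P_r$ associated with edge $e_r=(v_a,v_b)$, at most one of $v'_{i_s},v''_{i_s}$ is missing for each $i_s\in I$, so a valid in-order selection of surviving gadgets of $P_r$ within discrete Fr\'echet distance $\delta$ of $C$ exists. Conversely, if such a $C$ of total length $kc$ exists, then by the gadget-matching argument above its $k$ blocks correspond to $k$ vertex indices $\{i_1<\dots<i_k\}$ of $V(G)$; if two of them, say $i_s$ and $i_t$, were joined by an edge $e_r$, then $P_r$ is missing $V'_{i_s}$ and $V''_{i_t}$ by construction, so the only surviving $i_s$-gadget in $P_r$ is $V''_{i_s}$ and the only surviving $i_t$-gadget is $V'_{i_t}$, and the latter precedes the former in $P_r$ --- contradicting the in-order Fr\'echet alignment. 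Hence the indices form an independent set of $G$.

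The main obstacle is the geometric bookkeeping: one must verify that no clever alignment can ``splice'' a $c$-subchain of $C$ across two different gadgets of some $P_r$, but this is ruled out by the large quadratic separation between anchors, exactly as Property~(c) was used in the original proof. Since the optimal $|C|$ equals $c\cdot|I^*|$ with $c$ constant, and the sizes of the constructed chains inflate only by the constant factor $c$, the $n^{1-\epsilon}$ inapproximability of Independent Set~\cite{Ha99} transfers (up to an absorbed constant in $\epsilon$) to PcLSA, giving the corollary.
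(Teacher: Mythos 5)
Your proposal is correct and is essentially the intended argument: the paper itself gives no proof of this corollary beyond asserting that the reduction of Theorem~3.1 ``can be modified,'' and replacing each anchor vertex by a contiguous $c$-point cluster (a vertical translate at height $z=\delta$ for the double-primed copies, so the lockstep matching still realizes distance exactly $\delta$ while distinct anchors stay separated by $\gg\delta$) is precisely that modification. The blow-up by the constant factor $c$ is absorbed into $\epsilon$, so the $n^{1-\epsilon}$ inapproximability transfers as you say.
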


\section{Polynomial Time Solutions for PLSA When $m$ is Small}
In this section, we present a polynomial time solution for the 
PLSA problem when $m$ is a constant. We first show a dynamic programming
solution for the static PLSA and then we show how to use that as a subroutine
for the general PLSA problem, when $m$ is small.
\subsection{A Dynamic Programming Solution for the Static PLSA When $m$ is Small}

In this subsection, we present a dynamic programming solution for the static
PLSA problem when $m$ is small. Such a solution can be used as a subroutine
for the general PLSA problem. We first consider the case when $m=2$.
Besides $C$, we try to maximize the length of the aligned subsequences
in $P_1=A$ and $P_2=B$ with $|A|=n_1,|B|=n_2$. For the ease of description,
we only show how to obtain these lengths which are stored in $D[-,-,-,-]$
and $M[-,-,-,-]$ respectively. It is easy to reconstruct $C$ from these arrays.

Let $A[i_1,i_2]$ be a subchain of $A$ starting from the index $i_1$ and ending
at the index $i_2$.
Let $B[j_1,j_2]$ be a subchain of $B$ starting from the index $j_1$ and ending
at the index $j_2$.
$D[i_1,i_2,j_1,j_2]$ stores the length of the aligned subsequences of
$A[i_1,i_2]$ as a consequence of the alignment of $C$ and $A[i_1,i_2]$,
and $C$ and $B[j_1,j_2]$. $M[i_1,i_2,j_1,j_2]$ is defined symmetrically.

Intuitively $D[-,-,-,-]$ stores the length of aligned
subsequences from chain $A$ (dog's route) and $M[-,-,-,-]$ stores the length
of aligned subsequences from chain $B$ (man's route). Define
$T_{F}(i_1,i_2,j_1,j_2)$ as the sum of aligned subsequences in
both $A[i_1,i_2]$ and $B[j_1,j_2]$. Writing $A[i]$ as $a_i$ and $B[j]$ as $b_j$,
we have the dynamic programming solution as follows.
$$
T_{F}(i_1,i_2,j_1,j_2) = D(i_1,i_2,j_1,j_2)+M(i_1,i_2,j_1,j_2),
$$

where 
\begin{equation} 
D(i_1,i_2,j_1,j_2)=\max\begin{cases}
\max_{i_1\leq k_1 < i_2} \{D(i_1,k_1,j_1,j_2) + 1\}& \text{ if $d(a_{i_2},b_{j_2})\leq\delta, \backslash\backslash$ dog moves}\\
\max_{i_1\leq k_1 < i_2,j_1\leq k_2 < j_2} \{D(i_1,k_1,j_1,k_2)+1\}& \text{ if $d(a_{i_2},b_{j_2})\leq\delta,\backslash\backslash$ both move}\\
\max_{j_1\leq k_2 < j_2} \{D(i_1,i_2,j_1,k_2)\}& \text{ if $d(a_{i_2},b_{j_2})\leq\delta, \backslash\backslash$ dog stays}
\end{cases}
\end{equation}
and
\begin{equation} 
M(i_1,i_2,j_1,j_2)=\max\begin{cases}
\max_{i_1\leq k_1 < i_2} \{M(i_1,k_1,j_1,j_2)\}& \text{ if $d(a_{i_2},b_{j_2})\leq\delta, \backslash\backslash$ man stays}\\
\max_{i_1\leq k_1 < i_2,j_1\leq k_2 < j_2} \{M(i_1,k_1,j_1,k_2)+1\}& \text{ if $d(a_{i_2},b_{j_2})\leq\delta, \backslash\backslash$ both move}\\
\max_{j_1\leq k_2 < j_2} \{M(i_1,i_2,j_1,k_2)+1\}& \text{ if $d(a_{i_2},b_{j_2})\leq\delta, \backslash\backslash$ man moves}
\end{cases}
\end{equation}

The boundary cases are handled as follows.

\begin{equation}
D(i_1,i_1,j_1,j_1)=M(i_1,i_1,j_1,j_1)=\begin{cases}
1 & \text{ if $d(a_{i_1},b_{j_1})\leq\delta$,}\\
0 & \text{ if $d(a_{i_1},b_{j_1})>\delta$.}
\end{cases}
\end{equation}

The final solution value is stored in $T_F[1,n_1,1,n_2]$. We have the following theorem.
\begin{theorem}
When $m=2$, the static PLSA problem can be solved in $O(n^4)$ time and space.
\end{theorem}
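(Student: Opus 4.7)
The plan is to verify correctness of the given recurrences for $D$ and $M$ and then to show how to execute them in $O(n^4)$ time and space. The four indices $i_1\le i_2$ and $j_1\le j_2$ each range over $\{1,\dots,n\}$, so the table has $\Theta(n^4)$ entries. The real content of the theorem is therefore not counting entries but arguing that each entry can be evaluated in $O(1)$ amortized time, since a naive reading of the second case of the recurrence for $D$ (or $M$) would take $\Theta(n^2)$ per entry, giving a disappointing $\Theta(n^6)$ bound overall.

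First I would verify correctness by a case analysis on the last event in the paired walk that realizes the alignment of $C$ with $A[i_1,i_2]$ and with $B[j_1,j_2]$. When $d(a_{i_2},b_{j_2})\le\delta$, that last event falls into exactly one of the three labeled cases --- dog moves alone, both move, or dog stays --- and the aligned-subsequence lengths stored in $D$ and $M$ are incremented exactly on the side whose pointer advances. Together with the stated boundary case, induction on $(i_2-i_1)+(j_2-j_1)$ yields correctness; processing entries in nondecreasing order of $(i_2-i_1)+(j_2-j_1)$ guarantees that every referenced subproblem has already been solved.

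To drive the running time down to $O(n^4)$ I would introduce auxiliary prefix-maximum tables. For the first case, let $E_1(i_1,i_2,j_1,j_2)=\max_{i_1\le k_1\le i_2}D(i_1,k_1,j_1,j_2)$; then $E_1(i_1,i_2,j_1,j_2)=\max\{E_1(i_1,i_2-1,j_1,j_2),\,D(i_1,i_2,j_1,j_2)\}$, which is computable in $O(1)$. A symmetric one-dimensional prefix-max handles the third case. For the double maximum in the second case, let $E_2(i_1,i_2,j_1,j_2)=\max_{i_1\le k_1\le i_2,\,j_1\le k_2\le j_2}D(i_1,k_1,j_1,k_2)$; this obeys $E_2(i_1,i_2,j_1,j_2)=\max\{E_2(i_1,i_2-1,j_1,j_2),\,E_2(i_1,i_2,j_1,j_2-1),\,D(i_1,i_2,j_1,j_2)\}$, again $O(1)$ per entry. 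Analogous tables accompany $M$. Each of $D$, $M$, and the constantly many auxiliary tables has $\Theta(n^4)$ entries computable in $O(1)$, so total time and space are both $O(n^4)$.

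The main obstacle I anticipate is sequencing the updates so that the prefix-max tables always reference $D$-entries that are already filled in. Concretely, the second-case term $D(i_1,k_1,j_1,k_2)$ uses $k_1<i_2$ and $k_2<j_2$, so when computing $D(i_1,i_2,j_1,j_2)$ I read $E_2(i_1,i_2-1,j_1,j_2-1)$ rather than $E_2(i_1,i_2,j_1,j_2)$, and only after $D(i_1,i_2,j_1,j_2)$ has been finalized do I extend the running maxima to absorb it. This cleanly avoids self-reference and double counting. Reading out $T_F(1,n_1,1,n_2)=D(1,n_1,1,n_2)+M(1,n_1,1,n_2)$ then produces the optimum within the claimed bounds.
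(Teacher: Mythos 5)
Your proposal is correct, and in fact it supplies an argument that the paper itself omits: Theorem 4.1 is stated immediately after the recurrences with no accompanying analysis, and the bound is evidently obtained there by counting the $\Theta(n^4)$ table entries. You are right that this count alone does not give the theorem --- the ``both move'' case of the recurrence for $D$ (and for $M$) ranges over $\Theta(n^2)$ pairs $(k_1,k_2)$, so a literal evaluation of the recurrences costs $\Theta(n^6)$. Your running prefix-maximum tables $E_1$ and $E_2$, updated in $O(1)$ per cell and read at the shifted indices $(i_2-1,j_2)$, $(i_2,j_2-1)$ and $(i_2-1,j_2-1)$ to avoid self-reference, close this gap cleanly, and your processing order by $(i_2-i_1)+(j_2-j_1)$ guarantees all referenced subproblems are available. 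A remark worth adding: since $i_1$ and $j_1$ are never modified by any case of the recurrences and the answer is read off at $T_F(1,n_1,1,n_2)$, only the slice $i_1=j_1=1$ is ever reachable from the output cell, so the tables are effectively two-dimensional; this gives an alternative (and even cheaper) route to the claimed bound, though your method has the virtue of working for the four-index formulation exactly as written. Like the paper, you leave the recurrence's behavior when $d(a_{i_2},b_{j_2})>\delta$ implicit (the natural convention is to fall back on the maximum over strictly smaller subproblems, or $0$); that issue is inherited from the statement of the recurrences rather than introduced by your argument, but a careful write-up should fix the convention before running the induction.
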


It is easy to generalize this algorithm to the more general case when $m$ is
some constant. We thus have the following corollary.
 
\begin{corollary}
When $m$ is a constant, the static PLSA problem can be solved in $O(m^3n^{2m})$
time and $O(mn^{2m})$ space.
\end{corollary}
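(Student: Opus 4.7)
The plan is to lift the $m=2$ dynamic program to $m$ chains in the most direct way: each chain $P_t$ contributes a pair of indices $(i_1^{(t)},i_2^{(t)})$ to the DP state, and each chain contributes its own length array $D_t$ analogous to the arrays $D$ and $M$ used above. The state becomes the $2m$-tuple $(i_1^{(1)},i_2^{(1)},\ldots,i_1^{(m)},i_2^{(m)})$ --- there are $O(n^{2m})$ of them --- and $D_t[\cdot]$ records the length of the aligned subsequence drawn from $P_t$ at that state. The total alignment length at a state is $\sum_{t=1}^m D_t[\cdot]$, and the answer is read off the state $(1,|P_1|,\ldots,1,|P_m|)$.

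Next I would generalize the recurrence by case-splitting on which walkers take a step in the last move. In the $m=2$ program, the cases ``dog moves'', ``man moves'', ``both move'' partition the possibilities into three groups; for $m$ chains the analogous split is by the non-empty subset $S\subseteq\{1,\ldots,m\}$ of walkers that advanced at the last step, each of which earns a $+1$ in its $D_t$, while the stationary walkers inherit their $D_t$ from the predecessor state. The local feasibility guard $d(a_{i_2},b_{j_2})\le\delta$ of the $m=2$ program generalizes to the condition that the current positions of all advancing walkers be pairwise close enough that a single common-chain vertex (drawn from one of the $P_t$'s) can be placed within discrete \frechet\ distance $\delta$ of each. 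The boundary cases extend equation~(3) verbatim.

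For the complexity bookkeeping, $m$ arrays of $O(n^{2m})$ states each give the space bound $O(mn^{2m})$. At each state, for each of the $m$ arrays, one takes a maximum over $O(m^2)$ predecessor configurations coming from the subset-of-advancers case analysis, so the per-state work is $O(m^3)$ and the total running time is $O(m^3n^{2m})$. Reconstructing $C$ itself is a routine backtrace through the DP table.

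I expect the main obstacle to be the clean formulation and verification of the generalized feasibility condition: for $m=2$ a single pairwise check suffices, but for larger $m$ one must argue carefully that checking appropriate pairwise distances among the advancing walkers at each step is enough to certify the existence of a common $C$-vertex within discrete \frechet\ distance $\delta$ of each of them (so that all $m$ pairwise Fr\'echet constraints $\dfre(C,S_t)\le\delta$ are simultaneously satisfied). Once this is settled, the recurrence and its complexity follow by a mechanical case analysis that mirrors the $m=2$ proof.
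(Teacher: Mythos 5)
Your proposal is essentially the generalization the paper intends: the paper's entire ``proof'' of this corollary is the sentence ``it is easy to generalize this algorithm,'' so your $2m$-index state space, the $m$ per-chain length arrays $D_t$, and the case split on which walkers advance is exactly the construction being waved at, and you supply more detail than the paper does. Two bookkeeping points deserve care, though. First, the number of non-empty advancer subsets is $2^m-1$, not $O(m^2)$; for constant $m$ this is still $O(1)$ and polynomiality survives, but it does not literally reproduce the stated $O(m^3 n^{2m})$ dependence on $m$ (the paper never justifies that factor either). Second, and more substantively, in the $m=2$ recurrence the transitions maximize over $i_1\le k_1<i_2$ and $j_1\le k_2<j_2$, i.e.\ over $O(n)$ predecessor indices per advancing chain, so the naive per-state cost is $O(n^{|S|})$, not $O(1)$; one needs running (prefix) maxima over the trailing indices to reach the claimed time bound, an issue already latent in the paper's own Theorem~4.1 and which your $O(m^3)$ per-state accounting silently assumes away. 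Finally, the feasibility condition you flag as the ``main obstacle'' is a real gap in the problem formulation that the paper never resolves even for $m=2$: since $C$'s vertices are drawn from the union of the chains, the correct guard is that some candidate vertex lies within $\delta$ of the current vertex of every chain (which, restricting candidates to the $m$ current vertices, is an $O(m^2)$ check and is plausibly the source of the paper's $m$-dependence), and your honest identification of this as unsettled is fair --- the paper offers no argument for it either.
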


\subsection{A Polynomial Time Solution for PLSA When $m$ is Small}

Apparently, for any solution for PLSA we should allow translation and rotation.
When $m=2$ and when both translation and rotation are allowed, we can use
a method similar to that in \cite{JXZ07} to compute the optimal local alignment
with fixed $\delta$. The idea is as follows. Without loss of generality,
we assume that $A$ is static and we translate/rotate $B$ and let $\tau(B)$ be
the copy of $B$ after some translation/rotation. Let $|A|=n_1,|B|=n_2$ and let
$f$ be the degree of freedom for moving $B$. As we are in 3D and both
translation and rotation are allowed, we have $f=6$. We can enumerate all
possible configurations for $A$ and $\tau(B)$ to realize a discrete Fr\'{e}chet
distance of $\delta$. There are $O((n_1n_2)^f)=O(n^{12})$ number of such
configurations, following an argument similar to \cite{We02,JXZ07}. Then for
each configuration, we can use the above Theorem 4.1 to obtain the optimal
local alignment for each configuration and finally we simply return the
overall optimal solution.

\begin{corollary}
When $m=2$ and when both translation and rotation are allowed,
the PLSA problem can be solved in $O(n^{16})$ time and $O(n^{4})$ space.
\end{corollary}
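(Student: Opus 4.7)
The plan is to reduce the problem to polynomially many instances of the static PLSA with $m=2$, each solvable by Theorem 4.1. Without loss of generality, fix $A$ and write $\tau(B)$ for a rigid motion of $B$, where $\tau$ has $f=6$ degrees of freedom (three for translation, three for rotation). For any fixed $\delta$, the optimum value of the static PLSA on $(A,\tau(B))$ is a piecewise-constant function of $\tau$: it can change only when some pair $(a_i,\tau(b_j))$ crosses the threshold, i.e. when an equality $d(a_i,\tau(b_j))=\delta$ is attained. Hence, in order to find the overall optimum it suffices to sample one representative $\tau$ from each cell of the arrangement cut out, in parameter space, by the $n_1 n_2$ algebraic surfaces $\{\tau : d(a_i,\tau(b_j))=\delta\}$.

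The first step is to enumerate these representative configurations. Following the argument used in \cite{We02,JXZ07}, a canonical representative can be chosen as a $\tau$ lying on the intersection of $f=6$ of these constraint surfaces, so the number of critical configurations is bounded by $O((n_1 n_2)^f)=O(n^{12})$. The second step is, for each such $\tau$, to run the dynamic program of Theorem 4.1 on the static instance $(A,\tau(B))$, which takes $O(n^4)$ time and $O(n^4)$ space. Finally, we return the maximum alignment size over all configurations. Reusing the same table across configurations keeps the space at $O(n^4)$, and the total running time is $O(n^{12})\cdot O(n^4)=O(n^{16})$.

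The main obstacle, as in \cite{JXZ07}, is the geometric argument that these $O(n^{12})$ configurations truly suffice: one must verify that every realizable optimum of the moving problem can be witnessed by a $\tau$ at which at least $f$ pair-distances are tight, so that slightly perturbing $\tau$ cannot increase the number of pairs within distance $\delta$ and therefore cannot improve the optimum. This is precisely the kind of critical-configuration analysis carried out for the discrete Fr\'echet distance in \cite{JXZ07}, and the bound $O(n^{12})$ on the number of such configurations is inherited directly from that analysis; once it is in place, the rest of the argument is a routine enumeration plus an invocation of Theorem 4.1.
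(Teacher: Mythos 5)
Your proposal is correct and follows essentially the same route as the paper: fix $A$, enumerate the $O((n_1n_2)^f)=O(n^{12})$ critical configurations of $\tau(B)$ arising from the constraint surfaces $d(a_i,\tau(b_j))=\delta$ (citing the analysis of \cite{We02,JXZ07}), and run the static dynamic program of Theorem 4.1 on each, reusing the table to keep space at $O(n^4)$. Your added remarks on the piecewise-constant structure of the optimum over the arrangement merely make explicit what the paper defers to the cited works.
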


We comment that when $m$ is larger, but still a constant, the above idea
can be carried over so that we will still be able to solve PLSA in polynomial
time. It follows from \cite{We02,JXZ07} that we have $O(n^{mf})=O(n^{6m})$
number of configurations between the $m$ chains. Then we can again use
Corollary 4.1 to obtain the optimal local alignment for each configuration.
The overall complexity would be $O(n^{6m}\times m^3n^{2m})=O(m^3n^{8m})$ time
and $O(mn^{2m})$ space. Certainly, such an algorithm is only meaningful in
theory.

\begin{corollary}
When $m$ is a constant and when both translation and rotation are allowed,
the PLSA problem can be solved in $O(m^3n^{8m})$ time and $O(mn^{2m})$ space.
\end{corollary}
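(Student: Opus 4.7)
The plan is to generalize the argument used for Corollary 4.2 (the $m=2$ case) to arbitrary constant $m$. Fix $P_1$ in place, and let each of the remaining chains $P_i$ ($2 \le i \le m$) undergo an independent rigid motion $\tau_i$ in 3D, carrying $f = 6$ degrees of freedom. The resulting static instance consists of $P_1, \tau_2(P_2), \ldots, \tau_m(P_m)$, so that the optimum of PLSA under translation and rotation is the maximum, over all tuples $(\tau_2, \ldots, \tau_m)$, of the static PLSA value on this instance.

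The key observation is that the static algorithm underlying Corollary 4.1 depends on $(\tau_2, \ldots, \tau_m)$ only through the ``close-pair'' relation: for each pair of vertices $u,v$ drawn from two (possibly transformed) chains, whether $\lVert u - v\rVert \le \delta$. This relation is piecewise constant on the configuration space of tuples $(\tau_2, \ldots, \tau_m)$, with cell boundaries given by algebraic surfaces of bounded degree of the form $\lVert p - \tau_i(q)\rVert = \delta$ (for $p \in P_1$) or $\lVert \tau_i(p) - \tau_j(q)\rVert = \delta$ (for $i,j \ge 2$). Extending the arrangement-complexity arguments of \cite{We02,JXZ07} to the $m$-chain setting, the total number of combinatorially distinct cells in this configuration space is $O(n^{mf}) = O(n^{6m})$, and one representative configuration per cell can be enumerated within the same time bound.

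For each representative configuration, invoke the static algorithm of Corollary 4.1 on $P_1, \tau_2(P_2), \ldots, \tau_m(P_m)$, which runs in $O(m^3 n^{2m})$ time and $O(m n^{2m})$ space. Since the dynamic-programming tables can be reused across configurations, the space bound is preserved, while the total time is the product $O(n^{6m}) \cdot O(m^3 n^{2m}) = O(m^3 n^{8m})$. Returning the best alignment seen across all configurations yields the global optimum, completing the proof.

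The main obstacle is the counting step: one must carefully verify the $O(n^{mf})$ bound on combinatorially distinct cells in the $(m-1)f$-dimensional parameter space of rigid motions, showing that the arrangement induced by the $O(m n^2)$ algebraic boundary surfaces of bounded degree has complexity within the claimed bound. This is a direct but not automatic extension of the critical-configuration analysis in \cite{We02,JXZ07} from pairwise alignment to simultaneous multi-chain alignment, and the algebraic bookkeeping, rather than the dynamic programming, is the technically delicate part.
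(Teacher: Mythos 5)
Your proposal is essentially the paper's own argument: fix one chain, enumerate the $O(n^{mf})=O(n^{6m})$ critical configurations of the rigid motions following \cite{We02,JXZ07}, run the static dynamic program of Corollary 4.1 on each, and return the best, giving $O(n^{6m})\cdot O(m^3n^{2m})=O(m^3n^{8m})$ time with the DP tables reused to keep $O(mn^{2m})$ space. The configuration-counting step you flag as the delicate part is indeed the weakest link, but the paper treats it no more rigorously than you do, simply asserting the $O(n^{6m})$ bound by citation.
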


\section{Concluding Remarks}

In this paper, for the first time, we study the complexity/algorithmic
aspects of the famous protein local structure alignment problem under
the discrete \frechet\ distance. We show that the general problem is
NP-complete; in fact, it is even NP-hard to approximate within a factor of
$n^{1-\epsilon}$. On the other hand, when a constant number of proteins
are given then the problem can be solved in polynomial time. It would be
interesting to see the empirical comparisons of protein local structure
alignment under the discrete \frechet\ distance with the existing methods.
Another open problem, obviously, is whether it is possible to improve the
running time of the dynamic programming algorithms in Section 4.

\end{document}